\newtheorem{lemma}{Lemma}
\definecolor{ObservationBlue}{HTML}{006699}
\newtcolorbox{observationboxnonumber}[1][]{%
  enhanced,
  breakable,
  colback=gray!3,
  colframe=black!20,
  coltitle=black,
  fonttitle=\bfseries,
  title={Observation},
  boxrule=0.5pt,
  leftrule=0pt,
  rightrule=0pt,
  toprule=0pt,
  bottomrule=0pt,
  borderline west={2pt}{0pt}{ObservationBlue},
  boxsep=4pt,
  left=7pt,
  right=7pt,
  top=5pt,
  bottom=5pt,
  #1
}
\newtcolorbox[auto counter]{observationbox}[2][]{%
  enhanced,
  breakable,
  colback=gray!3,
  colframe=black!20,
  coltitle=black,
  colbacktitle=gray!10,
  coltitle=MidnightBlue!50,
  fonttitle=\bfseries,
  title={Observation~\thetcbcounter: #2},
  boxrule=0.5pt,
  leftrule=0pt,
  rightrule=0pt,
  toprule=0pt,
  bottomrule=0pt,
  borderline west={2pt}{0pt}{ObservationBlue},
  boxsep=4pt,
  left=7pt,
  right=7pt,
  top=5pt,
  bottom=5pt,
  #1
}
\newcolumntype{L}{>{\raggedright\arraybackslash}X}
\newtcolorbox{apispecbox}[2][]{%
  enhanced,
  breakable,
  colback=gray!3,
  colframe=black!20,
  coltitle=black,
  fonttitle=\bfseries,
  title={API: #2},
  boxrule=0.5pt,
  leftrule=0pt,
  rightrule=0pt,
  toprule=0pt,
  bottomrule=0pt,
  borderline west={2pt}{0pt}{ObservationBlue},
  boxsep=4pt,
  left=7pt,
  right=7pt,
  top=6pt,
  bottom=6pt,
  #1
}
\definecolor{SpecPink}{HTML}{FF99FF}
\definecolor{DecodeOrange}{HTML}{FF6633}
\definecolor{DecodeOrange}{HTML}{DD6633}
\definecolor{PrefillBlue}{HTML}{006699}
\begin{document}

\date{}

\title{\Large \bf Optimizing Agentic Language Model Inference via Speculative Tool Calls} %

\author{
{\rm Anonymous Authors}\\
} %

\author{
    {\rm Daniel Nichols}\\
    Lawrence Livermore National Laboratory
    \and
    {\rm Prajwal Singhania}\\
    University of Maryland
    \and
    {\rm Charles Jekel}\\
    Lawrence Livermore National Laboratory
    \and
    {\rm Abhinav Bhatele}\\
    University of Maryland
    \and
    {\rm Harshitha Menon}\\
    Lawrence Livermore National Laboratory
}

\maketitle

\begin{abstract}
Language models (LMs) are becoming increasingly dependent on external tools.
LM-based agentic frameworks frequently interact with their environment via such
tools to search files, run code, call APIs, etc. Further, modern
reasoning-based LMs use tools such as web search and Python code execution to
enhance their reasoning capabilities. While tools greatly improve the
capabilities of LMs, they also introduce performance bottlenecks during the
inference process. 
In this paper, we introduce novel systems optimizations to address such
performance bottlenecks by speculating tool calls and forcing sequences to
remain resident in the inference engine to minimize overheads.  Our
optimizations lead to throughput improvements of several hundred tokens per
second when hosting inference for LM agents. 
We provide a theoretical analysis of our algorithms to provide insights
into speculation configurations that will yield the best performance.
Further, we recommend a new ``tool cache'' API endpoint to enable LM providers to easily adopt these
optimizations.

\end{abstract}

\section{Introduction}\label{sec:intro}
Tool and function calling have enabled language models (LMs) to become useful
for tasks beyond just conversation by providing the ability to interact with
external environments and collect further context~\cite{yao2022react, schick2023toolformer, qin2024tool, qu2025tool}. In particular, LM-based
agentic tools and frameworks are often entirely reliant on external tools as
they are designed to interact with the environment to solve some problem or
accomplish a task. Popular agents such as software engineering agents (SWE
agents)~\cite{yang2024swe, liu2024large} need to interact with source code files and the command line to
execute actions. Although access to external tools yields much richer
capabilities and enables LMs to solve long-horizon, real-world tasks, it also
introduces several performance bottlenecks in the traditional inference
pipeline. Instead of a single, contiguous generation, the model alternates
between generation and tool invocation, often across many concurrent sessions,
resulting in gaps due to waiting on tool completion.

\begin{figure}[h]
    \centering
    \includegraphics[width=0.95\linewidth]{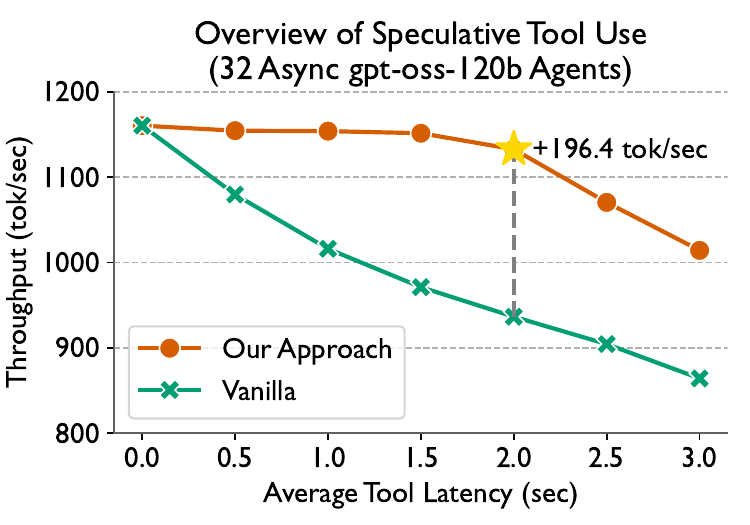}
    \caption{Our approach leads to up to 196 tokens/second improvement in the vLLM server when there are 32 gpt-oss-120b agents using it for inference. \label{fig:results-intro}}
\end{figure}

As tool-centric agents become more prevalent in code co-pilots, personal
assistants, and autonomous workflows, the overhead of using these agents is
increasingly dominated by the time spent waiting on tools.  Each tool call forces generation to stop and return to the user, who handles running the tool and sending back its output. This interrupt-driven process introduces a strict sequential dependency, where the latencies of individual tool calls accumulate and can significantly increase total generation time. Additionally, the evicted sequences and tool output must be rescheduled back into the inference engine, causing further overhead. Even with optimizations such as prefix-caching~\cite{kwon2023efficient, vllm_apc, zheng2024sglang, gao2024cost}, there are still significant overheads to removing the sequences, processing the tool output, and rescheduling them for generation. These overheads are further exacerbated in multi-tenant settings where scheduling and prefix-caching optimizations become more challenging with many agents contending for resources, and many concurrent prompts evicting existing entries in the KV-cache. Removing the sequential dependence and reducing these overheads in tool-heavy workloads is critical as agentic LMs become more widespread and need to be more economically viable and efficient.

Breaking the strict sequential dependency introduced by tool calls and reducing eviction overheads is non-trivial from a systems perspective.
Tool latencies can span orders of magnitude and depend
on external services, filesystem I/O, or user-defined code, making them
difficult to predict or bound.  
Long-running tools inevitably dominate end-to-end latency, whereas for short tools, the overheads of eviction and re-entry into the engine can outweigh any potential latency hiding strategies.
Furthermore, decoupling tool execution from model progression introduces correctness challenges, and naively running tools early or out of order can lead to wasted computation with unused results or outputs that are inconsistent with the model’s eventual decisions.
Finally, existing inference optimizations such as speculative decoding~\cite{leviathan_fast_2023, chen2023accelerating} and optimized KV-cache management~\cite{kwon2023efficient}
are implemented entirely within the inference engine and treat tools as opaque
interruptions, making it difficult for client-side innovations to impact
engine performance.

We address these challenges in agent tool calling by introducing \emph{speculative tool execution}, realized through two variants. First, we present a client-side strategy that breaks the sequential nature of tool execution without requiring any modifications to the inference engine, making it immediately compatible with existing inference services. Second, we introduce an engine-level mechanism that enables the inference system to ingest tool outputs directly and eliminate eviction-related overheads. These strategies allow us to mask portions of tool execution with generation for stateless and cheap-to-execute tools.
Furthermore, we provide an analytical performance model for both approaches and discuss the impact of tool durations, prediction accuracies,
and decode latencies on the performance improvements. The proposed
approaches are evaluated across a range of workloads, demonstrating up to
several hundred tokens per second throughput improvements.
\Cref{fig:results-intro} highlights the savings of our approach over standard inference with tool calling agents.

In this work, we make the following important contributions:
\begin{itemize}
    \item Two novel methods for speculating tool calls to optimize inference for agents by speculating work and keeping prompts resident in the engine.
    \item A theoretical analysis on the limitations of our methods, and which configurations will yield highest performing results.
    \item A detailed study of our two methods on agent workflows with tool calling.
\end{itemize}

\section{Background}\label{sec:bg}

In this section we provide background on tool calling and speculative decoding.

\subsection{Tool Calling}\label{sec:bg-tool-calling}

Most modern LMs use \emph{tool calling} (or \emph{function calling}) as an interface for interacting with external systems (command line, APIs, etc).
In a typical setup, the user provides (1) a natural language prompt describing the task and (2) a list of available tools with structured argument schemas.
Given this input, the model responds with either natural language or a structured \emph{tool call} specifying which tool to invoke and with what arguments.
The client then executes the tool and sends the tool output back to the model that includes the original conversation plus the tool output.
In LM agents this process is typically repeated until the model has enough context to produce a final output or it has completed its tasks.

From the perspective of the inference engine, each tool call breaks the generation into multiple, shorter requests separated by externally executing the tool. For any request, the engine first processes all tokens in the prompt during the \emph{prefill} phase to generate the first output token, and then produces subsequent tokens auto-regressively during the \emph{decode} phase. When a tool call is generated, further decoding of the sequence is halted, and it is removed from the active batch while the client executes the tool. Once the tool result is available, the client submits a new request consisting of the original prompt, the intermediate text, and the tool output, forcing the engine to prefill the history again before it can decode the next tool call or final output.
For agentic workloads with many tools and long multi-step plans, this leads to fragmented generations and overheads from repeatedly restarting generation with prompts increasing in length.
In long agent executions prefilling long prompts and waiting on tools can easily become big bottlenecks.

Two common optimizations are used today to mitigate these costs:

\vspace{0.08in}
\noindent \textbf{Parallel tool calls.}
    Instead of emitting a single tool invocation per turn, models can return multiple independent tool calls in one response~\cite{singh2024llm, chen2024facilitating}.
    This enables the client to execute tools concurrently and then return results back to the model in a single response.
    This reduces the number of model/tool round-trips, but still introduces gaps during inference where the engine has to hand off to the client.
    Furthermore, sometimes tools need to be called sequentially as they have a dependence on each other or the model needs the output of one tool before it can decide to call the next.
    Models also need to be trained to support parallel tool calling.

\vspace{0.08in}
\noindent \textbf{Prefix caching.}
    Many inference runtimes maintain a \emph{prefix cache} of KV state for recently used prompts~\cite{zheng2024sglang, kwon2023efficient}.
    This enables the runtime to reuse the KV cache if a user submits another request with the same history as a request they submitted shortly before; prefix caching drastically reduces prefill overhead during quick multi-turn API use.
    However, it does not eliminate the overhead of removing sequences from the batch, re-scheduling them, or managing cache eviction under multi-tenant loads, which can actually have high overheads if privacy-preserving algorithms are required~\cite{vllm_apc, song2025early}.

\subsection{Speculative Decoding}\label{sec:bg-spec-decoding}
Speculative decoding is an inference optimization that accelerates LM generation by utilizing a smaller, faster \emph{draft} LM to speculate tokens for a larger, slower LM.
Given the same prompt, the draft model proposes several \emph{speculative} tokens ahead of the target model.
The target model then verifies these proposed tokens in parallel: as long as the draft tokens are likely under the target’s distribution, they are accepted; when a mismatch occurs, the algorithm falls back to sampling from the target model at the first rejection point.
This was first proposed by Leviathan et al.~\cite{leviathan_fast_2023} and has been modified and improved in many ways since its first introduction~\cite{hu_speculative_2025}.

This setup works well for two reasons: draft models can often predict some easier text with high accuracy and the target model can validate tokens in a single forward pass.
The latter means there is no slowdown to inference, a forward pass of the target model is run regardless to sample the next token, but if it validates the speculative model, then we get several tokens for free.
The only extra cost is the compute needed to run the speculative model.

This work adapts the ideas behind speculative decoding, but with some subtle differences.
Generally, draft models in speculative decoding are only 3-10 tokens ahead of the target model.
This works great for speculative generation, but in speculative tool calling, if we were to launch tools as the draft model generates them, then we would only be overlapping tool calls with a few generated tokens.
This would only save a few milliseconds at most and be largely dominated by the longer tool calling times and API overheads.
Instead, with tools it is better to speculate as soon as possible and, in the case of stateless cheap tools, as many times as possible.
Our methodology is designed around this insight and key difference with speculative decoding.

\section{Speculative Tool Calling Algorithms}\label{sec:methodology}
Our goal in this work is to optimize the performance of tool calling in LM-based agents by speculating tool calls and executing them before they are needed. We propose two approaches for speculative tool calling: strictly client-side speculation that reduces tool latency and an inference-engine-side speculative algorithm that reduces both decode and prefill time.

In traditional tool calling, the user first submits a prompt to an API endpoint inference server.
The server starts processing the prompt and then returns the generated tokens to the user when it hits a stop token.
This could be the end of the sequence or a tool call token.
When the user receives the response from the server, they check why the generation stopped (e.g. tool call, end of sequence, timeout, max tokens, API credits) and, if it is a tool call, then they run the tool locally using the tool parameters generated by the model.
Once the tool completes the user passes the output back to the API and the LM can keep generating using the tool output.
This process is detailed in \Cref{alg:baseline-tool}.

\begin{algorithm}[h]
\caption{Baseline Tool-Calling (Client-side, No Speculation)}
\label{alg:baseline-tool}
\begin{algorithmic}[1]
    \Require Prompt $p$, chat history $H$, tools $\mathcal{T}$, main LM $\mathcal{M}$
    \Ensure Final model output $y$
    \State $r \gets \textsc{CallAPI}(\mathcal{M}, p, H, \mathcal{T})$
    \If{$\textsc{HasToolCall}(r)$}
      \State $\tau \gets \textsc{ExtractToolCall}(r)$
      \State $v \gets \textsc{RunTool}(\tau)$
      \State $H' \gets H \cup \{(\tau, v)\}$ \Comment{append tool result to history}
      \State $y \gets \textsc{CallAPI}(\mathcal{M}, p, H', \mathcal{T})$ \Comment{return final answer}
    \Else
      \State $y \gets \textsc{Render}(r)$ \Comment{no tool call; use first pass}
    \EndIf
    \State \Return $y$
\end{algorithmic}
\end{algorithm}

\subsection{Client-side Speculative Tool Calling}\label{sec:spec-tool:user}

\begin{figure*}[t]
    \centering
    \includegraphics[width=0.8\textwidth]{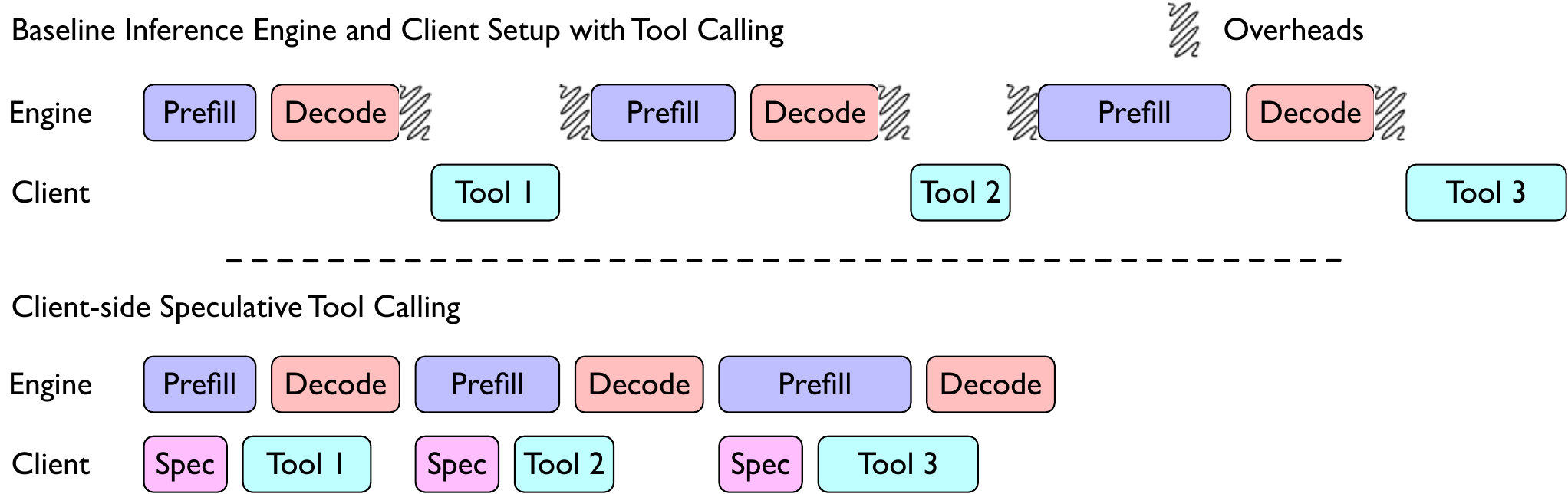}
    \caption{Overview of the client-side approach for speculative tool calling. Our approach speculates tool calls and overlaps their execution with generation.}
    \label{fig:client-side}
\end{figure*}

We expand on the traditional tool calling approach with speculative tool calling on the client side as described in \Cref{alg:spec-tool-cache}.
In this approach we asynchronously send our prompt to two API endpoints: the main model $\mathcal{M}$ and a speculative model $\mathcal{S}$.
The speculative model is much smaller and should return estimated tool calls with much lower latency than the main model.
When $\mathcal{S}$ returns a tool call, we asynchronously launch the tool and start executing it.
We store a {\it future} for its result in a tool cache.
When $\mathcal{M}$ finally returns generated tokens, we check if a tool call is required.
If it is required and there is a future for that tool call in the cache, we wait on the future instead of relaunching the tool. Figure~\ref{fig:client-side} provides an overview of our approach.
We can further increase speculation by launching several speculative model instances (represented by $\lambda$ in \Cref{alg:spec-tool-cache}).

\begin{algorithm}[!htbp]
\caption{Client-side Speculative Tool-Calling}
\label{alg:spec-tool-cache}
\begin{algorithmic}[1]
    \Require Prompt $p$, chat history $H$, tools $\mathcal{T}$, sampling factor $\lambda$, main LLM $\mathcal{M}$, speculative LM $\mathcal{S}$
    \Ensure Final model output $y$
    \State $C \gets \textsc{EmptyMap}()$ \Comment{cache: \textit{tool call} $\mapsto$ \textit{future}}
    \State \textbf{spawn} $h_m \gets \textsc{CallAPI}(\mathcal{M}, p, H, \mathcal{T})$ \Comment{start main generation asynchronously}
    \For{$i \gets 1$ to $\lambda$} \Comment{speculative samples in parallel}
      \State \textbf{spawn}:
      \Statex \hspace{1.2em} $s_i \gets \textsc{CallAPI}(\mathcal{S}, p, H, \mathcal{T})$
      \If{$\textsc{HasToolCall}(s_i)$}
        \State $\tau \gets \textsc{ExtractToolCall}(s_i)$
        \If{$\tau \notin C$}
          \State $C[\tau] \gets \textsc{StartToolAsync}(\tau)$ \Comment{begin executing tool; returns a future}
        \EndIf
      \EndIf
    \EndFor
    \\
    \State $r_m \gets \textbf{await}\; h_m$ \Comment{wait for main generation}
    \\
    \If{$\textsc{HasToolCall}(r_m)$}
      \State $\tau_m \gets \textsc{ExtractToolCall}(r_m)$
      \If{$\tau_m \in C$}
        \State $v \gets \textbf{await}\; C[\tau_m]$ \Comment{reuse if available}
      \Else
        \State $v \gets \textsc{RunTool}(\tau_m)$ \Comment{execute tool now}
      \EndIf
      \State $H' \gets H \cup \{(\tau_m, v)\}$ \Comment{append tool result to history}
      \State $y \gets \textsc{CallAPI}(\mathcal{M}, p, H', \mathcal{T})$ \Comment{call $\mathcal{M}$ with result}
    \Else
      \State $y \gets \textsc{Render}(r_m)$ \Comment{no tool call}
    \EndIf
    \\
    \State \Return $y$
\end{algorithmic}
\end{algorithm}

It is important to note that we can only speculate \emph{stateless, cheap} tools.
If a tool or function requires or modifies state, then speculating it is impossible without some undo or rollback mechanism.
This is an interesting potential optimization but out of the scope of this paper.
Further, if a tool call is expensive, either in dollar costs or execution time, it is undesirable to launch many attempts of the tool ahead of time if they will not be correct.
Fortunately, many common agent tools, such as web-search and file access, are both cheap and stateless providing ample opportunity for speculation.

\subsubsection{Analysis of Client-side Speculation}\label{sec:client-side-analysis}

We examine a particular case where the user is repeatedly calling the API and receiving a tool call from the API each time.
This closely mimics the usage pattern of most AI agents and is, thus, a reasonable case to consider.
The times to process $N$ requests from the user in the standard (\Cref{alg:baseline-tool}) and speculative (\Cref{alg:spec-tool-cache}) case are shown below.

\vspace{1.5em}
\begin{align}
    T_{\mathtt{standard}} &= N (\eqnmarkbox[MidnightBlue]{G1}{G} + \eqnmarkbox[WildStrawberry]{T1}{T}) \nonumber \\
    T_{\mathtt{spec}} &= \eqnmarkbox[Orchid]{a1}{\alpha} N \max\{\eqnmarkbox[MidnightBlue]{G2}{G},\eqnmarkbox[OliveGreen]{g}{g}+\eqnmarkbox[WildStrawberry]{T2}{T}\} + (1-\eqnmarkbox[Orchid]{a2}{\alpha})N(\eqnmarkbox[MidnightBlue]{G3}{G}+\eqnmarkbox[WildStrawberry]{T3}{T})
\end{align}
\annotate[yshift=1em]{above,left}{G1}{avg. $\mathcal{M}$\\ generation time}
\annotate[yshift=1em]{above,right}{T1}{avg. tool call time}
\annotate[yshift=0.5em,xshift=0.3em]{above,right}{g}{avg. $\mathcal{S}$ generation time}
\annotatetwo[yshift=-0.8em]{below}{a1}{a2}{$\mathcal{S}$ acceptance rate}
\vspace{1em}

The left term in $T_{\mathtt{spec}}$ is the case where the speculative model was correct and, thus, we only wait $g+T$ for the tool to complete.
The maximum is necessary since we still need to wait on $\mathcal{M}$ to complete to validate the right tool call, so if $g+T<G$, then we still have to wait $G$.
Assuming $T$, $g$, $G$, and $N$ are positive, non-zero and $0\le\alpha\le1$ we can further analyze the speedup of the speculative algorithm.
\begin{align}
    S_{\mathtt{spec}} =& \frac{T_{\mathtt{standard}}}{T_{\mathtt{spec}}} &> 1 \nonumber \\
  =& \frac{G + T}{\alpha \max\{G, g+T\} + (1-\alpha)(G+T)} &> 1 \label{eq:client-alg-speedup}
\end{align}
\begin{align}
  \text{If $g+T \ge G$, then} \nonumber \\
  S_{\mathtt{spec}} =& \frac{G+T}{\alpha(g-G)+G+T} &> 1 \nonumber \\
  & \alpha(g-G)+G+T &< G+T  \nonumber \\
  & \alpha(g-G) &<0\nonumber \\
    \implies & \alpha > 0 \textrm{ and } g < G \label{eq:client-alg-condition}
\end{align}

We can immediately see that $S_{\mathtt{spec}}>1$ when $\mathcal{S}$ has a non-zero acceptance rate and $\mathcal{S}$ is faster than $\mathcal{M}$. 
The case where $g+T<G$ is ignored, since this trivially leads to speedups greater than 1 when $\alpha>0$ (i.e. speculation and tool calling are completely masked by the main models generation time, so any correct speculation should lead to speedups).
Based on \Cref{eq:client-alg-condition} \emph{to achieve speedups with speculative tool calling on the client side we need to find a speculation model $\mathcal{S}$ that is faster than the main model $\mathcal{M}$ and has a tool speculation accuracy greater than 0}.

Another key insight from \Cref{eq:client-alg-speedup} is that $S_{\mathtt{spec}} < 2$, i.e. \emph{the maximum speedup we can achieve in this approach is 2.
Furthermore, the speedup approaches 2 as $g\rightarrow 0$}.
\Cref{fig:client-alg-model-speedups} shows the distribution of $S_{\textrm{spec}}$ across values of $\alpha$, $g/G$, and $T$.
We see the above trends, i.e. that $S_{\textrm{spec}}\rightarrow 2$ as $g\rightarrow 0^+$ and $S_{\textrm{spec}}<2$, but also that we only see large speedups for values of $T$ near $G$.
As $T\rightarrow 0$ the total time is predominantly dominated by $G$ and as $T\rightarrow \infty$ it is dominated by the tool call.
Thus, when $T\approx G$ there are more values of $\alpha$ and $g/G$ where we can see substantial benefits.
The above insights are also proven below in \Cref{lemma:client-alg-proof}.

\begin{figure*}[t]
    \centering
    \includegraphics[width=0.9\textwidth]{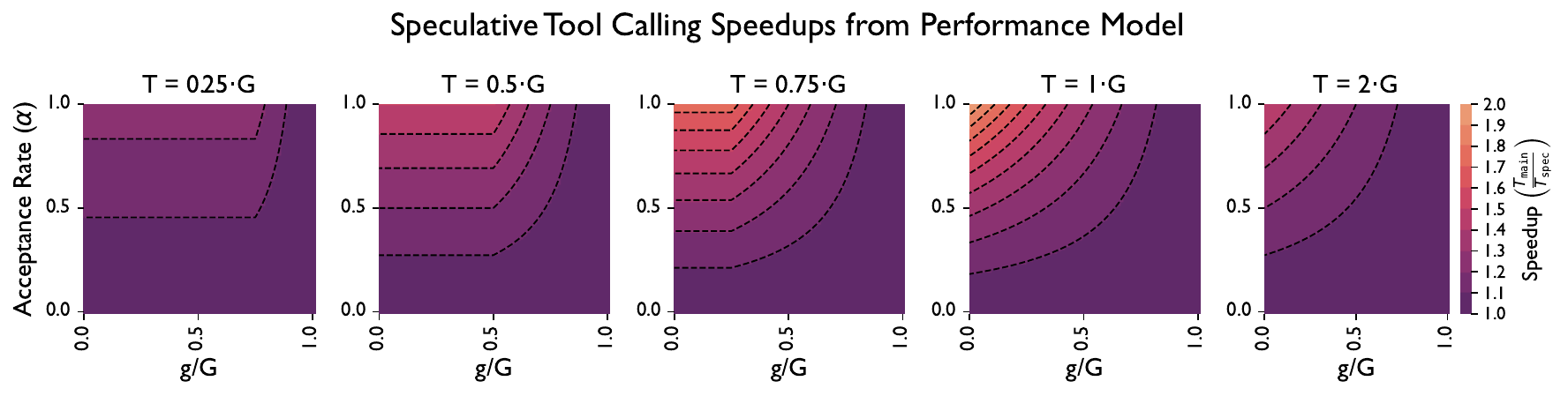}
    \caption{Distribution of speculative tool calling speedups for the client-side algorithm across various acceptance rates, generation times, and tool call times. $\alpha$ is the acceptance rate of the speculative model $\mathcal{S}$, $g/G$ is the ratio of generation time between the speculative and main models, and $T$ is the tool call time. Values of $T\approx G$, $\alpha > 0.5$, and $g/G < 0.5$ yield higher speedups, but are ultimately limited at a 2$\times$ speedup. \label{fig:client-alg-model-speedups}}
\end{figure*}

\begin{lemma}[Speedup bound] \label{lemma:client-alg-proof}
Let $G,g,T>0$, $g<G$, and $\alpha\in[0,1]$. Define
\[
S_{\mathrm{spec}}(\alpha)
=\frac{G+T}{\alpha\,\max\{G,\,g+T\}+(1-\alpha)(G+T)}.
\]
Then:
\begin{enumerate}
  \item $S_{\mathrm{spec}}(\alpha)$ is strictly increasing in $\alpha$, hence $S_{\mathrm{spec}}(\alpha)>1$ for all $\alpha>0$.
  \item The maximum over $\alpha\in[0,1]$ is attained at $\alpha=1$ and equals
  \[
  S_{\max}=\frac{G+T}{\max\{G,\,g+T\}}.
  \]
  \item Moreover,
  \[
  S_{\max}\le \frac{2(G+T)}{G+g+T}
  = 2-\frac{2g}{G+g+T} \;<\; 2.
  \]
  In particular, the supremum $2$ is approached only in the limit $g\to 0^+$.
\end{enumerate}
\end{lemma}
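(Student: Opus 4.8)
The plan is to treat $S_{\mathrm{spec}}$ as a function of the single variable $\alpha$ with $G,g,T$ fixed, reduce its denominator to an affine function of $\alpha$, and then read off the three claims in sequence. Throughout I would abbreviate $D := G+T$ and $M := \max\{G,\,g+T\}$, so that $S_{\mathrm{spec}}(\alpha) = D / \bigl(\alpha M + (1-\alpha)D\bigr) = D / \bigl(D - \alpha(D-M)\bigr)$.

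First I would record the one structural fact that drives everything: $M < D$. Indeed $G < G+T = D$ because $T>0$, and $g+T < G+T = D$ because $g<G$; the maximum of two quantities each strictly below $D$ is strictly below $D$. Hence $D-M>0$, so the denominator $D-\alpha(D-M)$ is a strictly decreasing affine function of $\alpha$ that stays positive on $[0,1]$ (it equals $D>0$ at $\alpha=0$ and $M>0$ at $\alpha=1$). A positive, strictly decreasing denominator with constant numerator $D$ makes $S_{\mathrm{spec}}(\alpha)$ strictly increasing on $[0,1]$. Evaluating at $\alpha=0$ gives $S_{\mathrm{spec}}(0) = D/D = 1$, so strict monotonicity yields $S_{\mathrm{spec}}(\alpha)>1$ for every $\alpha>0$, which is part~(1). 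Strict monotonicity also forces the maximum over the compact interval $[0,1]$ to occur at the right endpoint $\alpha=1$, where $S_{\mathrm{spec}}(1) = D/M = (G+T)/\max\{G,\,g+T\}$, which is part~(2).

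For part~(3) the only extra ingredient is the elementary bound $\max\{x,y\}\ge (x+y)/2$ applied with $x=G$ and $y=g+T$, giving $M \ge (G+g+T)/2$. Substituting into $S_{\max}=D/M$ yields $S_{\max} \le 2(G+T)/(G+g+T)$, and the one-line rearrangement $2(G+T)/(G+g+T) = \bigl(2(G+g+T)-2g\bigr)/(G+g+T) = 2 - 2g/(G+g+T)$ exhibits the claimed form. Since $g>0$ and $G+g+T>0$ the subtracted term is strictly positive, so $S_{\max}<2$; and letting $g\to 0^+$ sends $2g/(G+g+T)\to 0$, so the bound tends to $2$, which is the precise sense in which the value $2$ is approached only in that limit.

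I do not expect a serious obstacle here, since every step is elementary; the real content is choosing the right normalization ($D-\alpha(D-M)$) and the right inequality ($\max$ dominates the average) and applying them in the correct order. The only points requiring care are verifying the strictness of $M<D$ (which genuinely uses both hypotheses $T>0$ and $g<G$) and checking that the denominator never vanishes on $[0,1]$, so that the monotonicity argument and the endpoint evaluations are legitimate.
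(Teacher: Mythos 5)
Your proposal is correct and follows essentially the same route as the paper's proof: the same reduction of the denominator to an affine, strictly decreasing function of $\alpha$ via $M=\max\{G,\,g+T\}<G+T$, the same endpoint evaluations for parts (1) and (2), and the same $\max\{x,y\}\ge(x+y)/2$ bound for part (3). If anything, you are slightly more careful than the paper in noting that the strict inequality $M<G+T$ genuinely uses both $T>0$ and $g<G$.
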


\begin{proof}
Let $M:=\max\{G,\,g+T\}$ and write
\begin{align}
S_{\mathrm{spec}}(\alpha) =& \frac{G+T}{\alpha M+(1-\alpha)(G+T)} \nonumber \\
=& \frac{G+T}{(G+T)+\alpha\,(M-(G+T))}. \nonumber 
\end{align}
Set $\Delta:=M-(G+T)$. Since $M\le G+T$, we have $\Delta\le 0$, with strict inequality because $G,g,T>0$.

\smallskip
\noindent\textbf{(1)} Since $M\le G+T$ and $G,g,T,\alpha>0$, the denominator of $S_{\mathrm{spec}}(\alpha)$ is strictly decreasing as $\alpha$ increases.
Thus, we can conclude that $S_{\mathrm{spec}}(\alpha)$ is strictly increasing with $\alpha$. Furthermore, since $S_{\mathrm{spec}}(0)=1$ it follows that $S_{\mathrm{spec}}(\alpha)>1$ for all $\alpha>0$.

\smallskip
\noindent\textbf{(2)} Monotonicity implies the maximum over $[0,1]$ occurs at $\alpha=1$, giving
\[
S_{\max}=S_{\mathrm{spec}}(1)=\frac{G+T}{M}=\frac{G+T}{\max\{G,\,g+T\}}.
\]

\smallskip
\noindent\textbf{(3)} For any $a,b>0$, $\max\{a,b\}\ge \tfrac{a+b}{2}$. 
Applying this with $a=G$ and $b=g+T$,
\begin{align}
S_{\max} =& \frac{G+T}{\max\{G,\,g+T\}} \nonumber \\
\le& \ \frac{G+T}{\tfrac{G+(g+T)}{2}}
= \frac{2(G+T)}{G+g+T}
= 2-\frac{2g}{G+g+T} \nonumber \\
<& \  2, \nonumber
\end{align}
where the strict inequality uses $g>0$. This proves the stated bound and its strictness.
\end{proof}

\begin{observationbox}{Client-side speculation limit}
Client-side speculative tool calling is limited at a $2\times$ speedup: we can only
hide one of the two dominant phases (generation or tool execution). 
For a good speculative model (fast and accurate), the best gains occur when the tool latency $T$ is approximately equal to the main model generation time $G$.
\end{observationbox}

\subsection{Engine-side Speculative Tool Calling}\label{sec:spec-tool:prefill}

\begin{figure*}[t]
    \centering
    \includegraphics[width=0.8\textwidth]{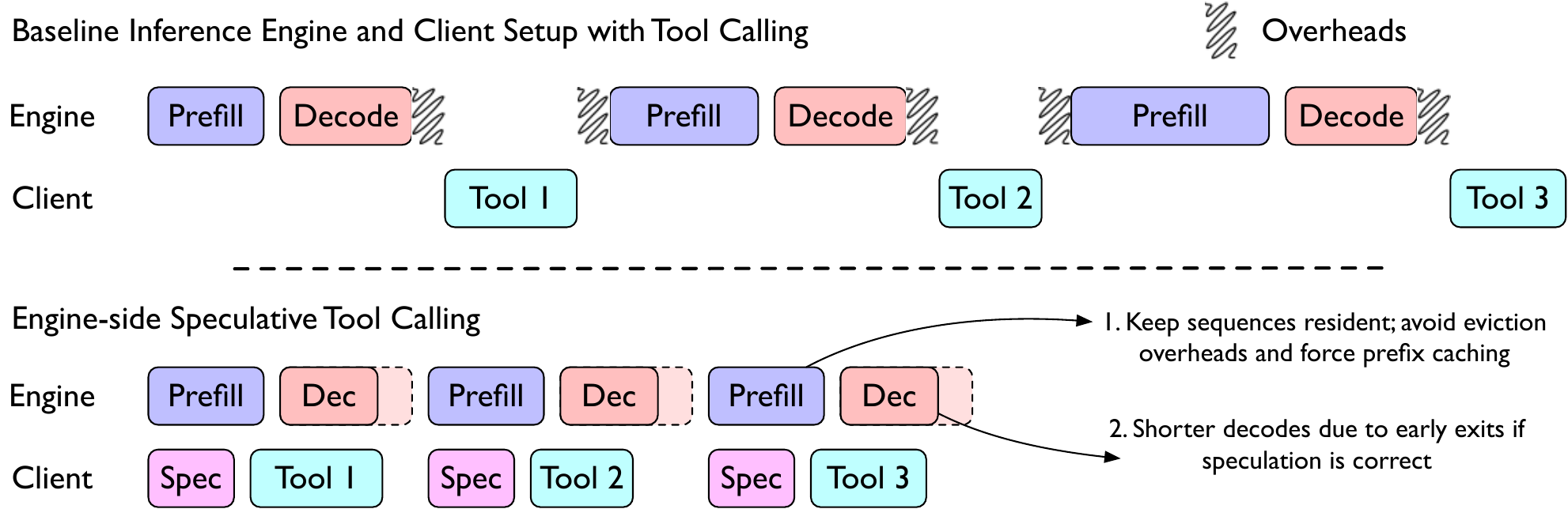}
\caption{Overview of our proposed engine-side algorithm for speculative tool calling for agents. First, our approach speculates tool calls and overlaps their execution with generation. Second, if speculated tool calls are available before the target model finishes decoding it, then we can validate the tool call in a single forward pass and early exit decoding. Finally, by posting speculated tool outputs to the server, we prevent the prefill time from continuing to grow and remove overheads from removing the prompt from the batch.}
    \label{fig:engine-side}
\end{figure*}

Next, we present our approach of moving our speculative tool calling method from the client to the engine. First, we present how a non-speculative inference engine with tool calling works in 
Algorithm \ref{alg:engine-baseline}. The engine repeatedly ingests new requests from the input queue into an active batch. For each sequence in the batch, it first encodes the prompt to produce the first token (prefill) and then generates subsequent tokens (decode). When a stop token (end of sequence or tool call token) is produced, the engine stops generation. If the stop token is a tool call token, the engine returns the tool call to the client, and evicts the sequence from the active batch. Its KV-cache is discarded unless prefix-caching is enabled.

\begin{algorithm}[!htbp]
\caption{Baseline Inference Engine: Batch Decode with Evict-then-Refill on Tool Calls}
\label{alg:engine-baseline}
\begin{algorithmic}[1]
\Require Input queue $\mathsf{Q}$, model $\mathcal{M}$, batch size $K$
\State $\mathsf{B} \gets \emptyset$ \Comment{active batch of sequences}
\While{\textsc{ServiceRunning}()}
  \State \textbf{fill} $\mathsf{B}$ from $\mathsf{Q}$ up to $K$ ready requests
  \ForAll{$s \in \mathsf{B}$}
  \If {$s.\textsc{status} = \textsc{New}$}
    \State $t \gets \textsc{Prefill}(\mathcal{M}, s)$ \Comment{first token}
    \State $s.\textsc{status} \gets \textsc{Decode}$
    \ElsIf {$s.\textsc{status} = \textsc{Decode}$}
  \State $t \gets \textsc{DecodeStep}(\mathcal{M}, s)$ \Comment{get next token}
    \EndIf
    \State
      \If{$\textsc{IsStop}(t)$}
        \State \textsc{EmitToClient}$(s, \text{final})$
        \State \textsc{Evict}$(\mathsf{B}, s)$ \Comment{KV-cache maybe dropped}
      \EndIf
    \EndFor
  \State \textbf{ingest} any new/follow-up requests from client (e.g. tool results) into $\mathsf{Q}$
\EndWhile
\end{algorithmic}
\end{algorithm}

The engine-side speculative approach adds a tool cache maintained by the inference engine. Similar to the previous approach, the client spawns multiple speculative model instances and launches tools asynchronously. However, instead of storing futures, it waits for each tool execution to finish and then submits the result to the engine, indexed by a normalized key (tool name + canonicalized arguments) and the request ID. On the engine side, we introduce two optimizations based on the tool cache. When a tool start token (the first token indicating that the model is about to produce a tool invocation) is detected for a sequence, the engine performs a cache lookup using the request ID and tool name.  
If found, the argument tokens of the latest matching tool call are injected as draft tokens and validated through standard speculative sampling. This ensures correctness and potentially avoids the need to generate the tool call arguments (early exit decoding). When a tool end token (the token indicating that the full tool call has been emitted) is detected, the engine looks up the full key. On a cache hit, the tool result is appended and the KV-cache updated so decoding can continue without eviction. Otherwise, the engine falls back to baseline behavior and emits the tool call to the client.~\Cref{fig:engine-side} provides an overview of this approach. The client and engine behaviors under this approach are described in Algorithms~\ref{alg:client-speculative} and~\ref{alg:engine-speculative}, respectively.

\begin{algorithm}[h]
\caption{Client with Speculative Tool Execution and Cache Submission}
\label{alg:client-speculative}
\begin{algorithmic}[1]

\Require Prompt $p$, chat history $H$, tools $\mathcal{T}$, main LM API endpoint $\mathsf{E}$, speculative LLM $\mathcal{S}$, samples $\lambda$
\Ensure Final output $y$
\State $(h, \mathit{rid}) \gets \textsc{CallAPI}(\mathsf{E}, p, H, \mathcal{T})$ \Comment{start main request; get request id}
\For{$i \gets 1$ to $\lambda$} \Comment{launch speculative samples}
  \State \textbf{spawn} $s_i \gets \textsc{CallLLM}(\mathcal{S}, p, H, \mathcal{T})$
  \If{$\textsc{HasToolCall}(s_i)$}
    \State $\tau \gets \textsc{ExtractToolCall}(s_i)$
    \State $f \gets \textsc{StartToolAsync}(\tau)$
    \State \textbf{spawn} \textsc{OnReady}$(f)$:
    \Statex \hspace{4em} $v \gets \textbf{await}\; f$
    \Statex \hspace{4em} $k \gets \textsc{CanonKey}(\tau)$ \Comment{normalize: tool name + canonicalized args}
    \Statex \hspace{4em} \textsc{SubmitToolCache}$(\mathsf{E}, \mathit{rid}, k, v)$
  \EndIf
\EndFor
\State
\State $r \gets \textbf{await}\; h$ \Comment{receive result (may be final or a tool call)}
\State
\If{$\textsc{HasToolCall}(r)$}
  \Comment{cache miss path: engine did \emph{not} find value in its cache}
  \State $\tau^\star \gets \textsc{ExtractToolCall}(r)$
  \State $v^\star \gets \textsc{RunTool}(\tau^\star)$
  \State $H' \gets H \cup \{(\tau^\star, v^\star)\}$
  \State $r \gets \textsc{CallAPI}(\mathsf{E}, p, H', \mathcal{T})$ \Comment{continue normal loop}
  \While{$\textsc{HasToolCall}(r)$}
    \State $\tau \gets \textsc{ExtractToolCall}(r)$
    \State $v \gets \textsc{RunTool}(\tau)$; $H' \gets H' \cup \{(\tau, v)\}$
    \State  $r \gets \textsc{CallAPI}(\mathsf{E}, p, H', \mathcal{T})$
  \EndWhile
\EndIf
\State $y \gets \textsc{Render}(r)$
\State \Return $y$
\end{algorithmic}
\end{algorithm}

\begin{algorithm}[h!]
\caption{Inference Engine with Tool Cache} %
\label{alg:engine-speculative}
\begin{algorithmic}[1]
\Require Input queue $\mathsf{Q}$, model $\mathcal{M}$, batch size $K$, tool cache $\mathsf{C}$ keyed by (request id, canon key) and (request id, tool name)
\State $\mathsf{B} \gets \emptyset$
\While{\textsc{ServiceRunning}()}
  \State \textsc{AsyncUpdate}($\mathsf{C}$) \Comment{from \textsc{SubmitToolCache}}
  \State \textbf{fill} $\mathsf{B}$ from $\mathsf{Q}$ up to $K$
  \ForAll{$s \in \mathsf{B}$}
  \If {$s.\textsc{status} = \textsc{New}$}
    \State $t \gets \textsc{Prefill}(\mathcal{M}, s)$; $s.\textsc{status} \gets \textsc{Decode}$
    \ElsIf{$s.\textsc{status} = \textsc{Decode}$}
      \State $t \gets \textsc{DecodeStep}(M, s)$
    \EndIf
    \State
     \If{$\textsc{IsToolCallStart}(t)$}
        \State $\eta \gets \textsc{ExtractToolName} (s, t)$
        \If{$\mathsf{C}.\textsc{Has}(\textit{rid}(s),\eta)$}
        \State $c \gets \mathsf{C}.\textsc{GetToolCall}(\textit{rid}(s),\eta)$
        \State $t \gets \textsc{SpeculativeSample}(\mathcal{M}, s, c)$ \Comment{Validate tool call tokens and update KV cache}
        \EndIf
        \ElsIf{$\textsc{IsToolCallEnd}(t)$}
        \State $\tau \gets \textsc{ExtractToolCall}(s, t)$
        \State $k \gets \textsc{CanonKey}(\tau)$
        \If{$\mathsf{C}.\textsc{Has}(\textit{rid}(s), k)$} \Comment{cache hit}
          \State $v \gets \mathsf{C}.\textsc{GetToolResult}(\textit{rid}(s), k)$
          \State $t \gets \textsc{PrefillToolResult}(\mathcal{M}, s, \tau, v)$ \Comment{append tool result tokens to KV cache}
        \EndIf
        \EndIf
      \State
      \If{$\textsc{IsStop}(t)$} \Comment{cache miss/other stop token}
        \State \textsc{EmitToClient}$(s, \text{final})$ 
        \State \textsc{EvictFromBatch}$(\mathsf{B}, s)$
      \EndIf
    \EndFor
  \State \textbf{ingest} \textsc{New} client requests into $\mathsf{Q}$
\EndWhile
\end{algorithmic}
\end{algorithm}

\subsubsection{Analysis of Engine-side Speculation}\label{sec:engine-side-analysis}

In the traditional case of tool calling, where the user receives the tool parameters, runs it, and calls the API again, we can model the runtime from the inference engine's perspective as shown in \Cref{eq:spec-engine-vanilla} for $K$ consecutive agent turns.
This accounts for the prefill and decode phase of each turn where $X_i$ tokens are prefilled, $R_i$ reasoning tokens are decoded, and $t_i$ tool call tokens are decoded.
Then the tool is called for $T_i$ seconds after which the tool call and output ($t_i+t_{o,i}$ tokens) are sent back to the API.
Finally, the entire history and new tool calls will be prefilled again, so we sum them in the $X_i$ recursive term.
In between these phases there is also overhead, $o$, to passing the prompts through the API via HTTP and loading it into the running batch.

\vspace{1em}
\begin{equation}\label{eq:spec-engine-vanilla}
    T_{\textrm{vanilla}} = 
    2K\eqnmarkbox[MidnightBlue]{o1}{o} + 
    \eqnmarkbox[Apricot]{p1}{\varphi} \sum_{i=1}^K X_i + 
    \eqnmarkbox[OliveGreen]{d1}{\delta} \sum_{i=1}^K 
        \left(
            \eqnmarkbox[Orchid]{r1}{R_i} + 
            \eqnmarkbox[Mahogany]{t1}{t_i}
        \right) + 
    \sum_{i=1}^K \eqnmarkbox[Red]{T1}{T_i}
\end{equation}
\annotate[yshift=1.2em,xshift=0em]{above,left}{o1}{API and eviction\\overheads}
\annotate[yshift=-1.2em]{below,left}{p1}{Prefill rate (s/tok)}
\annotate[yshift=1.2em]{above, left}{d1}{Decode rate (s/tok)}
\annotate[yshift=1.2em]{above, left}{T1}{Tool duration (s)}
\annotate[yshift=-1.2em]{below, left}{r1}{Reasoning tokens}
\annotate[yshift=-1.2em]{below, right}{t1}{Tool call tokens}
\vspace{1em}

such that

\begin{equation}
    X_{i+1} = X_i + t_i + t_{o,i}, \quad X_1 = \textrm{initial prompt tokens} \nonumber
\end{equation}

Here we assume that (1) reasoning tokens from previous turns are not included in the prefill for future turns as is typical for most commercial APIs and (2) 
there are no additional outputs alongside the tool call during each turn. The former assumption is generally true across API models, while the latter depends on 
how the agent framework is implemented (directly using tools versus thought-actions as in SWE-Agent). Despite this, our model is without loss of generality,
since you can absorb these extra decode tokens into the $t_i$ term.

Note that in the case of prefix-caching we can express the time as shown in \Cref{eq:spec-engine-pref-cache}. Here we do not have to repeat prefills for
tokens we have already populated into the KV-cache.

\begin{equation}\label{eq:spec-engine-pref-cache}
    T_{\textrm{cached}} = 2Ko + \varphi\left(X_1 + \sum_{i=1}^{K}(t_i+t_{o,i})\right) + \delta\sum_{i=1}^{K}(R_i+t_i) + \sum_{i=1}^{K} T_i
\end{equation}

Next we consider the three primary optimizations: (1) speculating tools prior to execution (\emph{O1}), (2) validating cached tool calls with a single forward pass during decoding (\emph{O2}), and (3) avoiding eviction and duplicate prefilling for tool calls where the output is already available (\emph{O3}).
The first optimization (\emph{O1}) can save us up to $\sum_{i=1}^K T_i$ time as it can potentially mask the tool call time.
The \emph{O2} optimization can potentially save $\delta \sum_{i=1}^K (t_i-1)$ time as we reduce the number of decodes from $t_i$ for generating the tool call to $1$ for validating the tool call.
Finally, the \emph{O3} optimization saves $2(K-1)o$ overhead in the ideal case and effectively forces prefix-caching to reduce the number of prefill tokens.
Based on these optimizations we can write the expected time in terms of the speculation accuracy $\alpha$ as show in \Cref{eq:spec-engine}.

\begin{align}\label{eq:spec-engine}
    T^{\ast}_{\textrm{spec}} =& (1-\alpha)2Ko + \varphi\left(X_1 + \sum_{i=1}^{K}(t_i+t_{o,i})\right) + \nonumber \\
    +& \delta \left(\alpha K + \sum_{i=1}^K R_i + (1-\alpha)\sum_{i=1}^K t_i\right) + (1-\alpha)\sum_{i=1}^K T_i
\end{align}

We can immediately see this is equivalent to $T_{\textrm{cached}}$ for $\alpha=0$. When $\alpha=1$ we save $2Ko + \delta(\sum_{i=1}^K t_i - K) + \sum_{i=1}^K T_i$ time over prefix-cached inference, since we avoid $2Ko$ overheads, the tool latencies, and the extra decodes for the tool calls.
Speedups will be optimal when the tool call duration $T_i$ is less than $\delta R_i$ so that the tool output will be available for speculation (\emph{O2}).
As in \Cref{sec:client-side-analysis} the optimal speedups for a single turn will then come when $T_i\approx \delta R_i$ as we are masking the entire tool call and saving on the most decode steps.
However, any value $T_i<\delta R_i$ should yield runtime savings.
This is an ideal finding as many state-of-the-art reasoning models commonly used in agents have several seconds to minutes reasoning traces, meaning we can speculate and execute many common tools in this time-frame.

\section{Experimental Setup}\label{sec:setup}
With the two proposed algorithms defined we now detail how we implemented them and the experiments we ran to test their effectiveness.

\subsection{Algorithm Implementations}\label{sec:impl}

The client-side algorithm is implemented on top of the OpenAI API~\cite{noauthor_openaiopenai-python_2025} using its async client.
Whenever a chat completion or response is posted to the main model, we also post a chat completion/response to the speculative API endpoint asynchronously.
If using multiple speculative samples, then we send multiple asynchronous requests.
When the speculative model API calls return, we launch the speculated tools asynchronously and put their futures into a tool cache keyed on function name and arguments.
This is implemented as a custom Python interface using Python's built-in asyncio library.
When the main model returns we check the tool cache if it called a tool.
If there is a cache hit, then we wait on the future and use its result as the tool output.
Otherwise we call the tool as normal.

For the engine-side algorithm we modify a custom fork of the popular vLLM framework~\cite{kwon2023efficient} to implement our algorithm.
A tool cache endpoint is implemented as a standard HTTP POST API endpoint (see later API spec). Similar to the client-side setup, one or more speculative model endpoints run asynchronously in parallel with the main model (\emph{O1}). The results of the speculated tool calls are posted to the tool-cache endpoint. 
Building on vLLM’s speculative decoding infrastructure, we implement a tool proposer responsible for detecting tool-call boundaries in the generated token stream. When the start of a tool call is detected, the proposer drafts the tool call tokens matching the latest entry in the cache with the same tool name. These draft tokens are then validated using standard speculative sampling~\cite{leviathan_fast_2023}, maintaining correctness (\emph{O2}). When the end of a tool call is detected, the proposer performs a lookup in the cache using both the tool name and its arguments. On a cache hit, the corresponding tool-output tokens are drafted; in this case, all drafted tokens can be accepted directly, since the match is exact on both name and arguments (\emph{O3}). If the lookup results in a cache miss, generation stops, and the engine returns the tool call to the client for normal execution.
For the engine-side algorithm to work in multi-turn scenarios we ensure streaming is enabled, so the client knows when to start speculating new tools.

The custom API endpoint we use is detailed below. This simple interface is all that is needed for existing commercial endpoint providers to expose this optimization to users.
Providers are incentivized to use the \texttt{cache-tool-output} API as it can cut down on inference time and increase throughput.
End-user can be incentivized by reduced costs and faster turnaround times. The cost reduction is not necessarily implicit, however,
many providers already offer discounts on prompts that hit the prefix-cache and our proposed optimization would fit nicely within that existing pricing model.

\begin{apispecbox}{\,\texttt{POST} \small \texttt{ /cache-tool-output/\{response\_id\}}}

\textbf{Description.}  
Cache tool outputs that the inference engine can utilize to keep sequences resident after tool calls. 
Each entry is keyed by \texttt{name} and (optionally) \texttt{params} and is matched against future tool calls by the model in this response.

\medskip
\noindent\textbf{Request body.}  
JSON array of objects with the following fields:
\medskip

\small
\begin{tabularx}{\linewidth}{@{}p{0.26\linewidth} L@{}}
\toprule
\textbf{Field} & \textbf{Description} \\
\midrule
\texttt{name} &
(string, required) Unique tool name used as part of the cache key. \\[0.25em]

\texttt{params} &
(object, optional) Canonicalized tool parameters included in the cache key.
Calls with the same \texttt{name} and equivalent \texttt{params} share outputs. \\[0.25em]

\texttt{output} &
(any, required) Serialized tool output to be cached and reused when the model
emits a matching tool call. \\[0.25em]

\texttt{keep\_alive} &
(number, optional) Optional time-to-live for the cache entry (e.g., in seconds). \\
\bottomrule
\end{tabularx}
\normalsize

\medskip
\noindent\textbf{Response.}
{\small
\begin{verbatim}
200 OK
{
  "cached": <int>  // number of entries accepted
}
\end{verbatim}
}

\end{apispecbox}

\subsection{Testing Environment}\label{sec:test-env}

We ran our implementations and experiments on a single compute node with four 80GB NVIDIA A100 GPUs controlled by one AMD EPYC 7763 CPU with 64 physical cores each at 2.45 GHz.
Each GPU pair is connected by third generation NVLink with 25 GB/s per direction links and each GPU is connected to the CPU via PCIe 4.0.
We run on a variation of SUSE Enterprise Linux Server 15 SP5 with Python 3.12 and CUDA 12.9.
In our experiments, we run two vLLM inference servers: one for the main model and one for the speculative model. 
The main model $\mathcal{M}$ uses a single A100 for serving, while the speculative server hosts three data parallel instances of the speculative model $\mathcal{S}$ across the other three A100s.

\subsection{Experiments}\label{sec:experiments}

To evaluate our inference optimizations, we use the dataset of prompts and tools from the BFCL project~\cite{patil_gorilla_2023}: a benchmark used to evaluate LMs on their tool calling capabilities.
BFCL defines a large number of tools that span a wide variety of tasks.
As a leaderboard, BFCL presents results comparing how well different LMs perform at calling the correct tools.
This provides a great starting point for selecting small, accurate speculation models.

From the BFCL leaderboards, we selected the xLAM models~\cite{zhang_xlam_2024} as ideal candidates for speculative models.
They are Llama 1B, 3B, and 8B fine-tunings on tool datasets to predict tool calls.
During testing we found the 8B model to predict around 80\% of tool calls from BFCL correctly, which is in line with the results they present in their paper.
Since these models are small and accurate at predicting tools, we utilize them as speculative tool calling models.

BFCL defines pairs of input prompts and sets of tools that can be invoked for each prompt, but it does not provide concrete tool implementations (only specifications). Our goal in this work is to study the inference- and systems-level behavior of agents with tools.%
We therefore treat tools as black boxes with specified interfaces and latencies. For each BFCL tool, we pre-compute a representative output using a mix of LM and human authored responses, and store these outputs in an in-memory cache. During experiments, whenever the model calls a tool, the cached tool output is returned and its runtime is controlled via manually configured tool latencies.

Crucially, our algorithms and measurements depend only on \emph{when} tools are called and \emph{how long} they take to run, not on the content of the tool outputs themselves. Precomputing outputs in this way preserves the model’s compute behavior while allowing us to systematically control tool latencies. As a consistency check, we manually inspected model generations across our experiments and found them to be reasonable and in line with typical outputs from the base model. We also ran a small experiment using simple, hand-implemented tools with real executions and observed no qualitative change in model behavior or quantitative change in the performance results, supporting the validity of our tool setup for system-level performance evaluation.

To turn BFCL into an agent workload, we treat each prompt and toolset pair as an independent task for an asynchronous agent. We run $M$ concurrent agents with $M \in \{1, 8, 32\}$. 
Each agent samples a prompt and its associated tools. It will loop continually calling tools (when requested) and generating text until it is finished and produces a final response. Then it will receive another prompt and tools, and keep repeating this process with new prompts.
We run each experiment long enough for every agent to complete 32 such tasks, and record the performance metrics reported in Section~\ref{sec:metrics} over the full run.
During our experiments we use the gpt-oss-120b~\cite{openai_gpt-oss-120b_2025} model as our primary agent as it is a state-of-the-art reasoning model that might be commonly used in an agent.

We systematically sweep the main experimental hyperparameters to study their impact.
For tool behavior, we sample tool latencies from random normal distributions of two types:
\emph{short-latency} distributions with mean latencies $\in 
\{0, 0.1,\ldots,0.5\}$,  and \emph{long-latency} distributions with means $\in \{0,0.5,\ldots,3\}$.
For speculation, we vary the number of speculative samples per turn ($1, 3, 5, 7,$ and $9$) and the speculation model (xLAM 1B, 3B, and 8B).
For each point in this search space, we run the workload under three inference configurations: standard inference (no speculation), client-side speculation, and engine-side speculation.
Each of these experiments is run five times to account for variation and system noise.
Additionally, we run a subset of experiments with the client-side algorithm and commercial models gpt-5 and gpt-5-nano to study how well this approach works for commercial models.

\section{Evaluation Metrics}\label{sec:metrics}
With our tool-calling agent experiments set up, it is important that we are able to measure the benefits of speculation in order to compare approaches.
Metrics like walltime are tricky since some agents might decide to generate lots of response text to a prompt, while another gives a short response. This can happen due to diversity in sampled outputs, even for the same prompt.
For example, consider agents A and B each call tools, receive outputs from those tools, and then generate a response. 
Agent A has its tool speculated correctly, and agent B does not.
If agent A generates a 2000 token response to the tool output, while agent B generations a 50 token response, then agent A will have a larger wall time despite getting better benefits from speculation.
For this reason we use \emph{throughput} and \emph{percent time saved} as metrics for comparison.

\vspace{0.7em}
\noindent\textbf{Throughput} is computed as the number of tokens generated per second. We compute this for each agent and report the average throughput.

\vspace{0.7em}
\noindent\textbf{Time Saved} is a measure of percent reduced time relative to a non-speculative baseline. We record time from agent start to finish with and without speculation for the same prompt(s). Let $T_{\text{base}}$ denote this time for the baseline agent (no speculation) and $T_{\text{spec}}$ the time for the same agent configuration using our speculative algorithm. We define
\[
\text{Time Saved} = 100 \times \frac{T_{\text{base}} - T_{\text{spec}}}{T_{\text{base}}} \,\%.
\]
We compute this quantity per agent and report the mean across all the async agents. Because both runs share the same prompts, tools, and tool outputs, this metric isolates the reduction in inference time attributable to speculation.

In addition to throughput and time saved, we also analyze cost for the client side algorithm.
Since it can be implemented entirely using an API it can be used with commercial models.
We measure the \textbf{cost} using the token usage data returned by the OpenAI API and the publicly listed per-token prices from OpenAI (costs based on November 2025 pricing).
Cost is reported as the \emph{additional cost} of speculation versus just using the main model.
Since agents can vary in number of turns or tokens, we report the cost per 100 agent turns.

\section{Results}\label{sec:results}
In this section we present the results from our evaluation of the two proposed algorithms.

\subsection{Client-side Speculation}
\Cref{fig:client-side-acceptance-rates} shows the throughput of various speculative models with different acceptance rates for the client-side algorithm {\it on a subset of the BFCL prompts}. 
We show these results on an easier to predict subset of tools, so that we can observe the behavior for values $\alpha > 0.8$ to validate \Cref{eq:client-alg-speedup}. In the later experiments we observed $\alpha \approx 0.8$ for xLAM-2-8B.
Results are shown for runs with an average tool latency of 1.5 seconds, but the trends are similar for other tool latency amounts.
Point annotations denote how many tool calls are speculated by the speculative model each turn ($\lambda$ in Algorithm~\ref{alg:spec-tool-cache})(e.g. $1\times$, $3\times$, ...).

\begin{figure}[h]
    \centering
    \includegraphics[width=\linewidth]{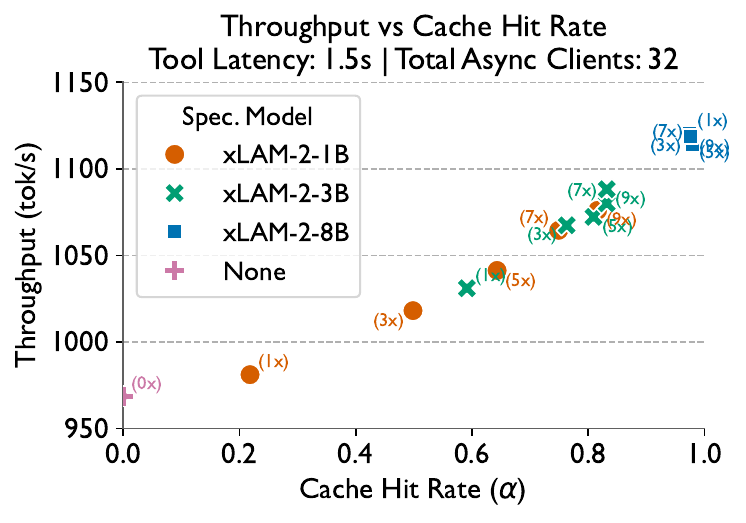}
    \caption{Throughput improvement for various speculative models and their cache hit respective tool cache hit rates. Points are annotated with the number of speculations per generation. We see a clear trend in improved throughput as the tool prediction rate increases. \label{fig:client-side-acceptance-rates}}
\end{figure}

We see that the trends from our performance model hold (see \Cref{sec:client-side-analysis}) hold; as speculation rate increases, we see an increase in inference throughput.
The 8B model yields much better accuracy in tool prediction than the smaller models, however, we see that the accuracy of the smaller models can also be improved by increasing the number of samples.
xLAM-2-1B achieves nearly 60\% higher hit rate with $9\times$ samples versus $1\times$ samples.
This flexibility is important as we may desire to run smaller speculative models due to memory constraints; taking more samples with a smaller model can get close to the performance of a larger speculative model.

\begin{figure}[ht]
    \centering
    \includegraphics[width=0.92\linewidth]{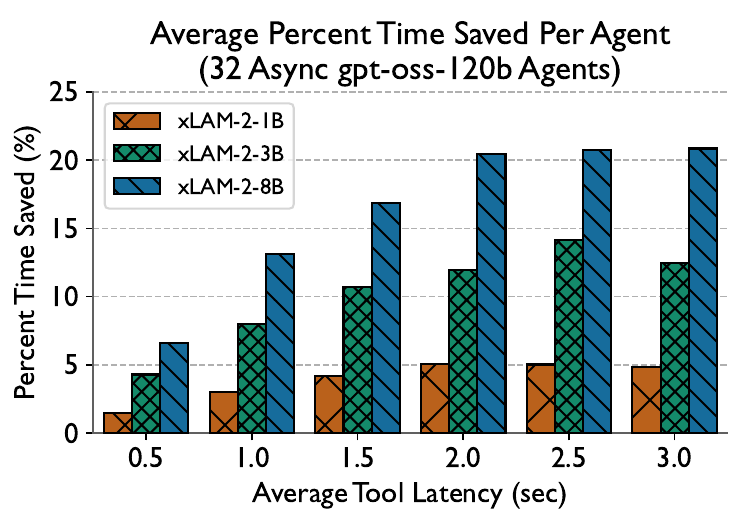}
    \caption{Average percent time saved across runs with various speculative models and average tool latencies. Results are shown for $1\times$ speculation per generation. We see that client side speculation can lead to between 6-21\% end-to-end time saving per agent when average tool latencies are 0.5 to 3 seconds. \label{fig:client-side-time-saved}}
\end{figure}

\Cref{fig:client-side-time-saved} shows the average percent time saved for each agent being run using the client side speculative algorithm. 
Even for shorter tools, around 0.5 seconds in duration, we still observe noticeable improvements with up to 6\% of total time saved by speculating the tools.
Results improve as the average tool time approaches gpt-oss-120b's average generation time; when tools average around 2.0-2.5 seconds we see the biggest gains with up to 21\% saved. 
The percent time saved slowly decrease as tools dominate the runtime and speculation only provides minor savings for long lasting tools.
This behavior is further evidence to the performance models in \Cref{sec:client-side-analysis} where we observe that tool calls around the generation time of the main model lead to the highest speedups.

\begin{observationbox}{Client-side Speculation Benefits}
An agentic framework, without any additional changes to the inference engine or API, can achieve 6-21\% time saved by speculating tools ahead of time. Savings vary depending on the ratio of model generation time and average tool call duration.
\end{observationbox}

\begin{figure}[h]
    \centering
    \includegraphics[width=\linewidth]{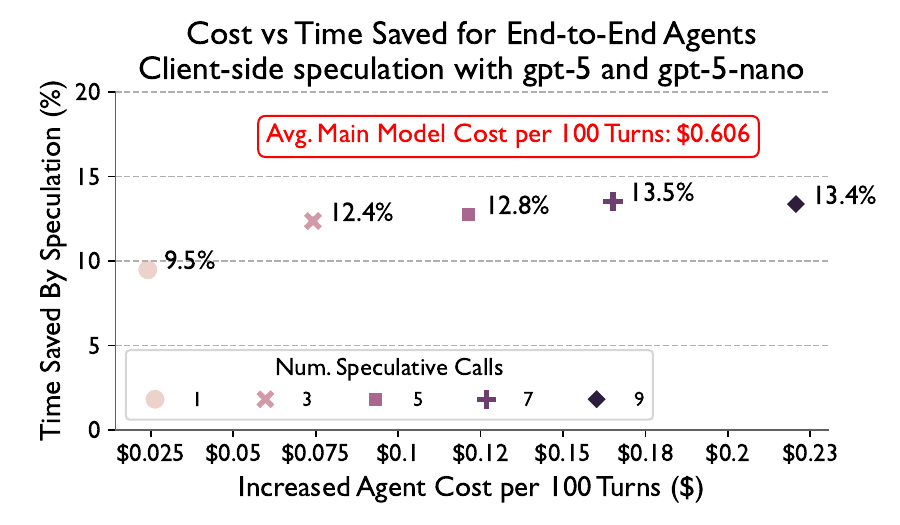}
    \caption{The increased cost for running client-side speculation with gpt-5-nano alongside a gpt-5 agent. Costs are shown for 100 agent turns. For only a 4\% price increase, agents can save nearly 10\% time.\label{fig:spec-cost}}
\end{figure}

We can also run the client-side speculation algorithm for commercial models where we only have API access.
\Cref{fig:spec-cost} shows the percent time saved by using speculation versus the added cost of doing that speculation.
The results are presented for a single agent ``turn", or one API then tool call pair.
We see that with one speculation we can get nearly 10\% time savings with only an extra API call that is an order of magnitude cheaper than the main API call.
Running nine speculations increases the time saved marginally, but is around 25-30\% of the cost of the main generation.

\subsection{Engine-side Speculation}

\Cref{fig:result-engine-side} shows the percent time saved for the engine-side algorithm.
We additionally show the times for average tool latencies in $\{0.1, 0.2, 0.3, 0.4, 0.5\}$ to highlight the regime where engine-side speculation helps most.
vLLM's speculative decoding infrastructure, what we build our implementation on top of, is currently in an experimental development phase~\cite{noauthor_speculative_nodate_v2} and we found high overheads for its spec-dec implementation when the batch size is greater than 1. 
For this reason we present our engine-side results with only one asynchronous agent.
Note that when there is only one asynchronous agent we see much lower generation latency from vLLM due to the request scheduler and smaller batch. Our observations align with findings in other works~\cite{singhania_llm_2025}. This leads to less potential gains as there is less generation time to overlap tool calls with, so we see lower percent time saved than in \Cref{fig:client-side-time-saved}.

\begin{figure}[h]
    \centering
    \includegraphics[width=\linewidth]{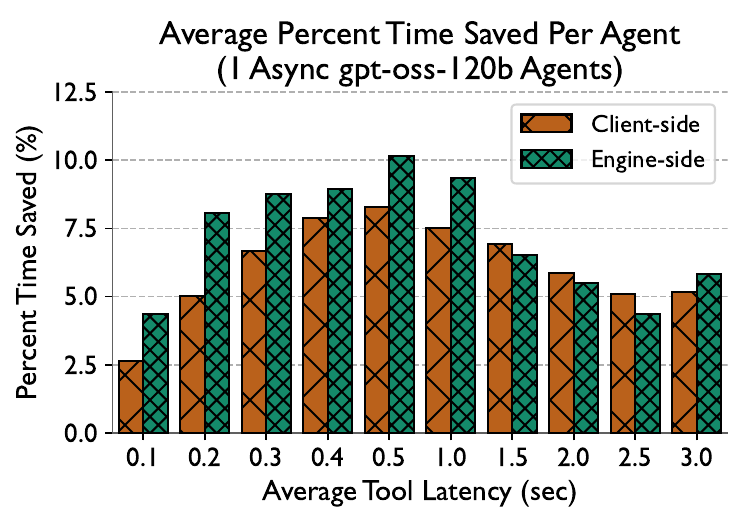}
    \caption{Average percent time saved comparison between client-side and engine-side algorithms for various average tool latencies. Results shown for one async client and xLAM-2-8B for speculation. When tools finish before the main model is done reasoning, we can achieve 2-3\% better time savings than the client-side approach.\label{fig:result-engine-side}}
\end{figure}

\begin{observationbox}{Engine-side Speculation Benefits}
Posting speculated tool outputs to an inference server to keep sequences resident when tool outputs are available a priori can lead to an additional 2-3\% time saved from the client's perspective.
The best benefits occur where tools latencies are less than the main model $\mathcal{M}$'s reasoning time.
This is ideal as many common agent tools fall in the 0-1 second range (e.g. \texttt{ls}, web search, read file, ...).
\end{observationbox}

We see 2-3\% increased time savings in the engine-side algorithm compared to the client-side algorithm. This is on top of the already high savings from the client-side speculation.
We see the best results where tool calls are between 0 and 1 second as these finish and are posted to the inference server before the reasoning phase is done.
Longer tool calls do not see any benefit over client-side speculation, since their results are not posted to the inference server before decoding is ready.

\section{Related Work}\label{sec:related-work}
Several works have looked at accelerating LM inference using speculation.
Leviathan et al.~\cite{leviathan_fast_2023} introduce speculative decoding to use a smaller LM to predict tokens for a larger LM.
While very related to our approach, the traditional speculative decoding does not work well for speculating tool calls.
As we have demonstrated in our analysis the main speedup from speculating tools comes from overlapping tool execution with generation.
In traditional speculative decoding the draft model is usually only ever 10 tokens ahead of the main model meaning we could only overlap tool calls with 10 decoding steps.
In the context of tool calling it is better to speculate tools as early and as many times as possible.
Other works~\cite{chen2023accelerating} have looked into other variations of generation speculation, such as in sampling, but suffer from the same drawback as speculative decoding: tool calls need to be speculated as soon as possible in the generation.

Other approaches to improving agent tool use performance try to increase the accuracy of smaller models at predicting the right tool to use. By using smaller LMs for taking actions we can reduce inference time. Schick et al.~\cite{schick2023toolformer} and Patil et al.~\cite{patil_gorilla_2023} accomplish this through fine-tuning on large datasets of tool calling interactions.
Works like SWE-Agent~\cite{yang2024swe} accomplish better and faster agent generation by providing the {\it best granularity} of tools to the agent. By giving it the right tools they reduce the number of tool calls it needs to make and ultimately the overall time for the agent to execute.

\section{Conclusion}
In this paper we showed that speculative execution of tools is an effective way to reduce the inference overheads of agents that use tools. We introduced two complementary techniques: a client-side speculative tool-calling algorithm that works with unmodified black-box APIs, and an engine-side algorithm that keeps prompts resident in the inference server by validating tool calls and ingesting tool outputs directly from a tool cache. With an analytical model we present the conditions where tool calling helps the most; we highlight the roles of tool latency, speculative model accuracy, and decode cost in determining how much speedup is achievable in practice. Our experiments show that these methods yield double-digit percent time savings per agent turn and increase throughput by hundreds of tokens per second. 
Speculative tools have the potential to rapidly accelerate the use of agents in tool heavy tasks, however, properly implementing speculative tool algorithms requires care to accomplish efficiently; this work provides the insights and building blocks to create effective speculative tool calling agents.

\section*{Acknowledgments}
This work was performed under the auspices of the U.S. Department of Energy (DOE) by Lawrence Livermore National Laboratory
under Contract DE-AC52-07NA27344 (LLNL-CONF-2014336-DRAFT). This
material is based upon work supported by the DOE Office of Science,
Advanced Scientific Computing Research program through solicitation DE-FOA-0003264, "Advancements in Artificial Intelligence
for Science."
This research used resources of the National Energy Research Scientific Computing Center (NERSC), a U.S. Department of Energy Office of Science User Facility, operated
under Contract No. DE-AC02-05CH11231 using NERSC
award ALCC-ERCAP0034775.

\bibliographystyle{plain}
\bibliography{citations/references,citations/extra}

\end{document}